\newtheorem{theorem}{Theorem}[section]
\newtheorem{proposition}[theorem]{Proposition}
\newcommand \ba{\mathbf{a}}
\newcommand \bb{\mathbf{b}}
\newcommand \bd{\mathbf{d}}
\newcommand \bff{\mathbf{f}}
\newcommand \bh{\mathbf{h}}
\newcommand \bp{\mathbf{p}}
\newcommand \bq{\mathbf{q}}
\newcommand \br{\mathbf{r}}
\newcommand \bv{\mathbf{v}}
\newcommand \bA{\mathbf{A}}
\newcommand \bI{\mathbf{I}}
\newcommand \bR{\mathbf{R}}
\newcommand \bX{\mathbf{X}}
\newcommand \mcD{\mathcal{D}}
\newcommand \mcE{\mathcal{E}}
\newcommand \mcG{\mathcal{G}}
\newcommand \mcL{\mathcal{L}}
\newcommand \mcN{\mathcal{N}}
\newcommand \0{\boldsymbol{0}}
\newcommand \1{\boldsymbol{1}}
\newcommand \bphi{\boldsymbol{\phi}}
\newcommand \btheta{\boldsymbol{\theta}}
\newcommand \blambda{\boldsymbol{\lambda}}
\newcommand \bnu{\boldsymbol{\nu}}
\newcommand \bzeta{\boldsymbol{\zeta}}
\newcommand \bbeta{\boldsymbol{\beta}}
\newcommand \bxi{\boldsymbol{\xi}}
\newcommand \bPhi{\boldsymbol{\Phi}}
\newcommand \diag{\textrm{diag}}
\def\realnumbers{\mathbb{R}}
\title{\LARGE \bf
Feedback Optimization of Incentives for Distribution Grid Services
}
\author{Guido Cavraro \and Joshua Comden \and Andrey Bernstein
\thanks{This work was authored by the National Renewable Energy Laboratory, operated by Alliance for Sustainable Energy, LLC, for the U.S. Department of Energy (DOE) under Contract No. DE-AC36-08GO28308. Funding provided by the NREL Laboratory Directed Research and Development Program. The views expressed in the article do not necessarily represent the views of the DOE or the U.S. Government. The U.S. Government retains and the publisher, by accepting the article for publication, acknowledges that the U.S. Government retains a nonexclusive, paid-up, irrevocable, worldwide license to publish or reproduce the published form of this work, or allow others to do so, for U.S. Government purposes.}
\thanks{G. Cavraro, J. Comden, and A. Bernstein are with the National Renewable Energy Laboratory. Emails: {\tt guido.cavraro@nrel.gov, joshua.comden@nrel.gov, andrey.bernstein@nrel.gov}.
}
}
\begin{document}

\maketitle
\thispagestyle{empty}
\pagestyle{empty}

\begin{abstract}
Energy prices and net power injection limitations regulate the operations in distribution grids and typically ensure that operational constraints are met.
Nevertheless, unexpected or prolonged abnormal events could undermine the grid's functioning.
During contingencies, customers could contribute effectively to sustaining the network by providing services.
This paper proposes an incentive mechanism that promotes users' active participation by essentially altering the energy pricing rule.
The incentives are modeled via a linear function whose parameters can be computed by the system operator (SO) by solving an optimization problem. Feedback-based optimization algorithms are then proposed to seek optimal incentives by leveraging measurements from the grid, even in the case when the SO does not have a full grid and customer information. Numerical simulations on a standard testbed validate the proposed approach.
\end{abstract}

\section{Introduction}\label{sec:intro}

The massive deployment of distributed energy resources (DERs) in distribution networks (DNs) is dramatically changing the electric grid.
Future DNs will be populated by \emph{prosumers} rather than mere users, i.e., entities that can be both producers and consumers of energy~\cite{Cavraro2022Feedback}.
Prosumers could provide services to the grid, e.g., by contributing to voltage profile improvements.
On the other hand, DERs could lead to grid instabilities or damages if not properly managed.
For instance, voltage imbalances due to single-phase rooftop photovoltaic panels in residential, low-voltage DNs have been widely observed~\cite{mohammadi2016challenges}.

Countless methods for regulating DER power outputs have been proposed in the last few years.
Many works in the literature assume that DERs apply power setpoints, possibly directly dispatched from the SO, aiming at the grid's well-being. This would require DER owners to be willing, or forced, to provide grid services even at the expense of their benefit.
DER owners may then have priorities misaligned with those of the SO and refuse to cooperate.
The work~\cite{cui2023load} treated the case in which the DER compliance is modeled with a Bernoulli distribution.
Introducing discomfort functions is a popular way to incorporate human satisfaction in the control or optimization of power systems~\cite{ospina2020personalized}.

SOs could leverage economic incentives like discounts on the energy price to encourage \emph{rational} prosumers, i.e., aiming at maximizing their benefits, to provide grid services~\cite{bhattacharya2022incentive,zhou2017incentive} during abnormal operations, e.g., heat or cold waves~\cite{levin2022extreme}.
Market-based algorithms to incentivize DERs to provide services to the grid while maximizing their objectives and economic benefits were designed in the literature~\cite{mohsenian2010autonomous,maharjan2013dependable,wang2014stackelberg}.
For example, customers may be incentivized to
adjust the output powers of DERs
in real-time to aid voltage regulation~\cite{liu2008distribution}, control the aggregate
network demand~\cite{li2015market}, and follow regulating signals\cite{vrettos2016robust}.
The work~\cite{Alahmed&Cavraro&Bernstein&Tong:23AllertonArXiv} proposes a pricing mechanism for energy communities ensuring that operational constraints are satisfied and guaranteeing that the surplus of each community member is higher than the maximum one under standalone settings.
A trading scheme for increasing the exchange of electricity from prosumers to a distribution network meeting the network constraints is designed in~\cite{azim2023dynamic}.

In this paper, we consider a DN in which prosumers are subject to a net energy metering (NEM 1.0) tariff design, i.e., they incur a linear affine cost for the net power they inject~\cite{alahmed2022net}.
Under contingencies, e.g., heat or cold waves~\cite{levin2022extreme}, the SO encourages prosumers to provide grid services through incentives.
The incentives make rational prosumers change their power demand to support grid operations by essentially altering the energy price, and are designed so that the prosumers are not penalized or rewarded if they do not change their behavior.
The grid model is presented in Section~\ref{sec:modeling}.
Section~\ref{sec:market} reports the considered pricing rule, defines rational customers and introduces the incentive mechanism. The goal of the SO is to design optimal incentive functions that promote the satisfaction of operational constraints while minimizing the cost for the SO. The incentive mechanism is characterized in Section~\ref{sec:inc_prop}. 
Section~\ref{sec:feedback} proposes an asymptotically stable feedback controller for iteratively updating the incentive functions toward their optimum value when the optimization problem cannot be directly solved, e.g., because of a lack of information.
Finally, Section~\ref{sec:simulations} provides numerical results over the standard IEEE 33 bus test feeder, and conclusions are drawn in Section~\ref{sec:conclusions}

\emph{Notation:} Lower- (upper-) case boldface letters denote column vectors (matrices). 
The identity matrix, the vector of all ones, and the vector of all zeros are denoted by $\bI$, $\1$, $\0$; the corresponding dimension will be clear from the context. 
The sets of real numbers and nonnegative real numbers are denoted as $\mathbb R$  and $\mathbb R^+$, respectively. 
The two norm of a matrix $\bA$ is defined by $\|\bA\| = \sqrt{\lambda_{\max}(\bA^\top \bA)}$, where $\lambda_{\max}(\bA^\top \bA)$ is the largest eigenvalue of $\bA^\top \bA$. 


\section{Grid Modeling}
\label{sec:modeling}

A distribution grid with $N+1$ buses can be modeled by an undirected graph $\mcG = (\mcN, \mcE)$, where nodes $\mcN = \{0, 1, \dots, N\}$ are associated with the electrical buses and whose edges represent the electric lines.
The substation is labeled as 0 and is modeled as an ideal voltage generator (the slack bus) imposing the nominal voltage of $1$ p.u. 
%
%
%
%
%
%
%
Each bus, except the substation, is assumed to be a prosumer\cite{Cavraro2022Feedback}. A prosumer is both a producer and a consumer of energy. Prosumer $n$ can generate the active power $r_n \in \realnumbers_+$ potentially exploiting behind-the-meter DERs. Also, prosumer $n$ has an active and reactive power demands $d_n \in \realnumbers_+$ and $q_n \in \realnumbers$. The net active power injection is
\begin{equation}
\label{eq:nodal_inj}
p_n = r_n - d_n.
\end{equation}
Powers will take positive (negative) values, i.e., $p_n, q_n \geq 0$ ($p_n, q_n \leq 0$) when they are \emph{injected into} (\emph{absorbed from}) the grid.
When $p_n \geq 0$, $n$ behaves like a generator; when $p_n \leq 0$, $n$ behaves like a load.
Let $\bd \in \realnumbers^N$ and $\br \in \realnumbers^N$ collect all the demands and DER outputs. Potentially, each prosumer $n$ may have some flexibility in the net power injection, i.e.,
\begin{equation}\label{eq:Conslimit}
p_n \in [\underline{p}_n, \overline{p}_n], \quad n = 1,\dots,N.
\end{equation} 
For loads that are not flexible, e.g., critical loads, we can set trivially $\underline{p}_n = \overline{p}_n$.
The model~\eqref{eq:Conslimit} potentially captures load limitations enforced to not compromise the network's operation, like fixed export limits (e.g., 5kW or 3.5kW \cite{Liu&Ochoa&Riaz&Mancarella&Ting&San&Theunissen:21PEM}) or dynamic operating envelopes~\cite{Alahmed2023Operating}.
The bounds~\eqref{eq:Conslimit} represents essentially an inner approximation of the  power injections feasible region where a solution for the power flow equations is guaranteed to exist and meet the operational constraints.
Hereafter, the generation from DERs is assumed to be constant and $r_n$ is considered a fixed parameter.
The limitation~\eqref{eq:Conslimit} is then equivalent to
\begin{equation}\label{eq:Conslimit1}
d_n \in \mcD_n = [\underline{d}_n, \overline{d}_n], \quad n = 1,\dots,N.
\end{equation} 
Denote by $v_n\in \realnumbers$ the voltage magnitude at bus $n \in \mcN$, and let the vector $\bv\in\realnumbers^N$ collect the voltage magnitudes of buses $1,\ldots,N$.
Voltage magnitudes are nonlinear functions of the power injections; however  first-order Taylor expansion of the power flow equation yields~\cite{Bolognani2015Distributed}
\begin{equation}
\bv = \bR \bp + \bX \bq + \boldsymbol \omega
\label{eq:v=Rp+Xq}
\end{equation}
where $\bR \in \realnumbers_+^{N\times N}$ and $\bX \in \realnumbers_+^{N\times N}$ are symmetric and positive definite matrices~\cite{Cavraro2022Feedback} and $\boldsymbol \omega \in \realnumbers_+^N$.

\section{Energy Market and Incentives}
\label{sec:market}

According to the NEM 1.0, prosumer $n$ net power injection is charged following the rule
$$\gamma(p_n) = - \pi p_n + \pi_0$$
where $\pi > 0$ is the retail rate and $\pi_0$ captures non-volumetric surcharges, e.g., the connection charge~\cite{alahmed2022net}.
When the prosumer net consumes, the first term in $\gamma(p_n)$ is positive, meaning that $n$ is charged. When the prosumer net produces power, the first term in $\gamma(p_n)$ is negative, meaning that $n$ is remunerated.
Without loss of generality, we assume that the coefficients $\pi, \pi_0$ are the same for all the prosumers. 

The {\em surplus} of customer $n$ is the difference between the comfort and the payment from consumption
\begin{align}\label{eq:Surplus_gen}
    \hat S_n(d_n,r_n) & = U_n(d_n) - \gamma(p_n) \notag\\
    & = U_n(d_n) - \pi d_n + \pi r_n - \pi_0
\end{align}
where we used~\eqref{eq:nodal_inj}. The utility of consumption $U_n(d_n)$ is assumed to be strictly concave and continuously differentiable with a marginal utility function $\nabla U_n$.
We denote the inverse marginal utility by ${f}_n:=(\nabla U_n)^{-1}, \forall n\in \mcN$.  

We assume that each prosumer $n$ acts \emph{rationally}, i.e., aims to maximize its surplus; $n$ sets its power demand as the solution of the following optimization problem
\begin{equation}
\label{eq:Sur_max}
\hat{d}_n = \arg \max_{d_n \in \mcD_n}  \hat{S}_n(d_n,r_n)    
\end{equation}
where the price coefficients $\pi$ and $\pi_0$, and the generation $r_n$ are given.
Indeed, the price coefficients are defined in the contract between the utility and the customers and are usually updated once every several months or a few years.
The optimal demand can be easily computed as
\begin{equation}\label{eq:gen_max_d}
    \hat d_n = [f_n(\pi)]_{\mcD_n}.
\end{equation}
where $[\cdot]_{\mcD_n}$ denotes the projection onto the set $\mcD_n$.

Even though~\eqref{eq:Conslimit} typically ensures that the network operates correctly, unexpected or abnormal events, like sudden generation drops or heat and cold waves, might affect the network operations. 
In these cases, the SO could then ask the prosumers to provide grid sevices to avoid grid damages and instabilities.
The SO could compensate prosumer $n$ for their services by means of incentives captured by a function $g_n(d_n,\bxi_n)$, where $\bxi_n$ is a vector of parameters. Define also 
$$\bxi = \begin{bmatrix}
\bxi_1^\top & \dots & \bxi^\top_N    
\end{bmatrix}^\top.$$
The goal of $g_n(d_n,\bxi_n)$ is essentially to shape the surplus~\eqref{eq:Surplus_gen} of prosumer $n$, which becomes
\begin{equation}
S_n(d_n,r_n, \bxi_n) = \hat S_n(d_n,r_n) + g_n(d_n,\bxi_n)
\label{eq:Surplus}
\end{equation}
so that the new solution of
\begin{equation}
\label{eq:Sur_max_g}
d_n^*(r_n,\bxi_n) = \arg \max_{d_n \in \mcD_n} S_n(d_n,r_n)
\end{equation}
is favorable for grid operations, see Figure~\ref{fig:shaping1} and \ref{fig:shaping2}.  Heed that $g_n(\hat d_n, \xi_n)$ equals zero, meaning that no remuneration is given to $n$ if it does not provide services.

\begin{figure}[tb]
    \centering
    \includegraphics[width=0.5\columnwidth]{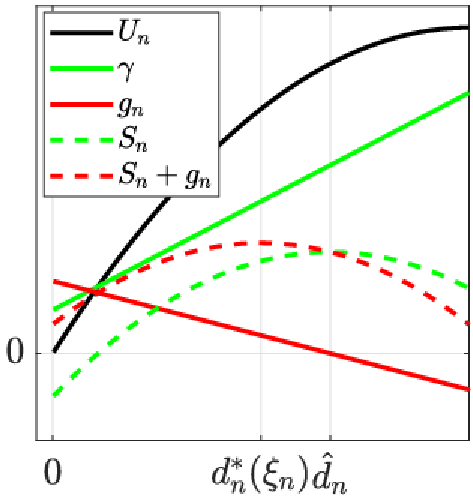}
    \caption{The incentive function shapes prosumer $n$ surplus. Here, the utility of consumption is quadratic and the incentive function is linear. When $\xi_n$ is negative, the demand is reduced.}
    \label{fig:shaping1}
\end{figure}

\begin{figure}[tb]
    \centering
    \includegraphics[width=0.5\columnwidth]{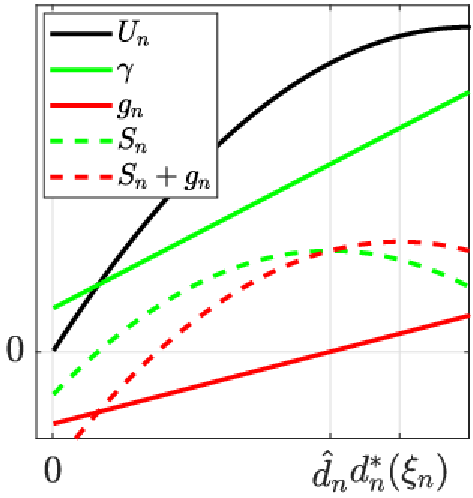}
    \caption{The incentive function shapes prosumer $n$ surplus. Here, the utility of consumption is quadratic and the incentive function is linear. When $\xi_n$ is positive, the demand increases.}
    \label{fig:shaping2}
\end{figure}

The SO's goal is to find the $\bxi_n$'s that minimize the cost of sustaining the distribution grid while ensuring that operational constraints are met, i.e., to solve the optimization problem
\begin{subequations}\label{eq:opt_probl_gen}
\begin{align}
\bxi^* := \arg\min_{\bxi} &\sum_n g_n(d_n^*(r_n,\bxi_n),\bxi_n) - \pi d_n^*(r_n,\bxi_n) + \notag \\
& \qquad - \pi_0 + \pi r_n  \label{eq:optcost}\\
\text{s.t. }& \; \phi(\bd^*,\br,\bxi) = 0 \label{eq:eqconst}\\
& \; \mathbf h(\bd^*,\br,\bxi) \leq 0. \label{eq:inconst}
\end{align}    
\end{subequations}
where the equality constraint~\eqref{eq:eqconst} captures the power flow equations, the inequality~\eqref{eq:inconst} models the grid operational constraints, e.g., voltage or line flows  limits, and we assume that prosumers behave rationally and aims to maximize their profit.
The interactions between the utility company (leader) and prosumers (followers) is a Stackelberg game~\cite{YU2016702}, where the players select the optimal strategy by solving the optimization problems~\eqref{eq:Sur_max_g} and~\eqref{eq:opt_probl_gen}.


\section{An Incentive Mechanism for Grid Services}
\label{sec:inc_prop}

While problems~\eqref{eq:Sur_max_g} and~\eqref{eq:opt_probl_gen} are stated in general forms, here we will study the properties of the proposed framework under particular choices of functions and parameters that will make the results meaningful and interpretable. Nevertheless, all the methods described in the following, after proper modifications, could still be used to solve~\eqref{eq:Sur_max_g} and~\eqref{eq:opt_probl_gen}.

\paragraph{Quadratic utility functions}
similar to what is commonly done in the literature, e.g., see~\cite{YU2016702,samadi2010optimal}, we consider quadratic prosumer utility functions
$$U_n(d_n) = - \frac{\alpha_n}{2} d_n^2 + \beta_n d_n, \quad \alpha_n \in \realnumbers^+,\beta_n \geq \pi.$$
\paragraph{Arbitrary power demand}
we assume that the prosumers can choose the $d_n$'s to be an arbitrary nonnegative number, i.e., we disregard~\eqref{eq:Conslimit}.
This choice slightly simplifies the form of the prosumer demand~\eqref{eq:gen_max_d}; indeed, we can write the \emph{nominal} (i.e., the one in the absence of incentives) demand and the net power injection for prosumer $n$ as
$$\hat d_n := \frac{\beta_n-\pi}{\alpha_n}, \qquad \hat p_n := r_n - \hat d_n.$$
The projection in~\eqref{eq:gen_max_d} would just complicate the notation hereafter without adding anything conceptually and can be performed easily in practical applications. 

\paragraph{Linear incentive functions}
we consider incentive functions of the form
\begin{equation*}
\rho_n(d_n,\xi_n,\eta_n) = \xi_n d_n + \eta_n
\end{equation*}
where $\xi_n, \eta_n \in \realnumbers$. 
To avoid a prosumer being charged or remunerated even if it does not change its power demand, we impose that 
$$\rho_n(\hat d_n) = 0.$$
This yields
$\eta_n = - \xi_n \hat d_n.$

Hence, the incentive function becomes
\begin{align}
\label{eq:incentive_g}
g_n(d_n,\xi_n) & = \rho_n(d_n,\xi_n,-\xi_n \hat d_n) = \xi_n (d_n - \hat d_n)
\end{align}
from which it is clear that a prosumer will be remunerated if it deviates from his nominal consumption $\hat d_n$. 
The parameters vector $\bxi \in \realnumbers^N$ hereafter collects all the $\xi_n$'s. 

\paragraph{Approximated power exchange} We neglect the power losses and we assume that the power delivered to the distribution network through the substation is 
\begin{equation*}
p_0 = - \sum_n p_n = \1^\top(\bd - \br).    
\end{equation*}
Together with the approximation~\eqref{eq:v=Rp+Xq}, the former equation yields the convex optimization problem reported in the following. 
The SO could however in principle solve an optimization problem considering the true power flow equations. 

The former choices yield the next quantities. The surplus~\eqref{eq:Surplus} can be written
\begin{align*}
S_n(d_n,\xi_n) & = - \frac{\alpha_n}{2} d_n^2 + \beta_n d_n - \pi d_n + \pi r_n \\
& \qquad  - \pi_0 + \xi_n (d_n - \hat d_n)
\end{align*}
and the optimal consumption~\eqref{eq:gen_max_d} for prosumer $n$ is
\begin{align*}
d_n^*(\xi_n) &=\frac{\beta_n - \pi + \xi_n}{\alpha_n} = \hat d_n + \frac{ \xi_n}{\alpha_n}  
\end{align*}
that is, the new surplus maximizer is a linear perturbation of the one  without incentives.
Let us collect all the optimal consumptions with or without incentives in the vectors $\bd^*(\bxi)$ and $\hat \bd$, respectively.
Then,
\begin{align}
\label{eq:hat_d}
&\hat \bd = \bA \bbeta - \pi \bA \1\\
\label{eq:opt_d}
&\bd^*(\bxi) = \hat \bd + \bA \bxi
\end{align}
with
$\bA := \diag(\ba), \quad
\ba := \begin{bmatrix}
\frac{1}{\alpha_1} & \dots & \frac{1}{\alpha_N} 
\end{bmatrix} ^\top.$

To avoid pathological situations in which the incentive mechanism yields negative power demand, add the condition
\begin{equation}
\label{eq:xi_const}
\bxi \geq \pi \1 - \bbeta, \qquad \bbeta := 
\begin{bmatrix}
\beta_1 & \beta_2 & \dots & \beta_N    
\end{bmatrix}^\top.
\end{equation}
Under the consumption $\bd^*(\bxi)$, the power delivered to the distribution grid is 
\begin{equation}
\label{eq:p0_xi}
p_0(\bxi) = \1^\top \bA \bxi + \1^\top \hat \bd - \1^\top \br.
\end{equation}
Also, the remuneration due to the prosumers, i.e., the sum of the incentives, for their services is quadratic in $\bxi$. Indeed:
\begin{align*}
\sum_n & \, g_n(d_n^*(\xi_n),\xi_n) - \pi d_n^*(\xi_n) + \pi r_n - \pi_0 = \\
& = \sum_n \frac{\xi_n^2}{\alpha_n} - \frac{\pi}{\alpha_n} \xi_n - \gamma(\hat p_n)  = \bxi^\top \bA \bxi + \bb^\top \bxi + c
\end{align*}
where 
$ \bb := - \pi \bA \1,  \quad c := - \sum_n \gamma(\hat p_n).
$

Using~\eqref{eq:v=Rp+Xq} and~\eqref{eq:opt_d}, the voltage magnitudes are expressed as a function of the incentive
\begin{align}
\label{eq:vmagn}
\bv = - \bR \bA\bxi + \hat \bv
\end{align}
with 
$\hat \bv := \bR \br - \bR \hat \bd + \bX \bq + \boldsymbol \omega$.

The SO's task is to design the incentive mechanism in order to keep voltages and the power supplied to the grid within prescribed limits. That is, we consider the following particular instance of~\eqref{eq:opt_probl_gen}
\begin{subequations}
\label{eq:opt_prob_part}
\begin{align}
\bxi^* = \arg \min_{\bxi}& \; \bxi^\top \bA \bxi + \bb^\top \bxi + c \label{eq:opt_prob_part_cost}\\
\text{s.t. }& \; \eqref{eq:xi_const} - \eqref{eq:p0_xi} - \eqref{eq:vmagn} \notag\\
& \; \underline \bv \leq \bv \leq \overline \bv \label{eq:opt_prob_part_volt} \\
& \; \underline p_0 \leq p_0 \leq \overline p_0 \label{eq:opt_prob_part_PCC} 
\end{align}
\end{subequations}
The constraint~\eqref{eq:opt_prob_part_volt} captures voltage operational constraints; whereas~\eqref{eq:opt_prob_part_PCC} enforces the power exchange with the external network to be within a desired interval, possibly modeling the case in which the grid is required to behave as a Virtual Power Plant.
We will also assume that the feasible set described by equations~\eqref{eq:xi_const}, \eqref{eq:p0_xi}, \eqref{eq:vmagn}, \eqref{eq:opt_prob_part_volt}, and~\eqref{eq:opt_prob_part_PCC} is \emph{non empty}.
The objective~\eqref{eq:opt_prob_part_cost} is convex. 
Hence, problem~\eqref{eq:opt_prob_part} is strictly convex and admits a unique minimizer. 


\section{Achieving the Optimal Incentives}
\label{sec:feedback}

The optimal incentive $\bxi^*$ can be computed by directly solving problem~\eqref{eq:opt_prob_part} when the SO has full network information, i.e., it knows the grid parameters $\bR$, the power demands $\bd$ and $\bq$, the DER power outputs $\br$, and the user preferences parameters $\bA$ and $\bbeta$. 
However, such a scenario of perfect grid information is unusual in distribution networks, for instance, because of a lack of real-time metering infrastructure. 
Hence, we propose the following feedback optimization algorithms in which the missing information is compensated by \emph{measurements} and problem~\eqref{eq:opt_prob_part} is  solved iteratively.

\subsection{A Dual Ascent Method for the Full Information Case}
Suppose the SO does not have available real-time information about behind-the-meter generation $r_n$ and reactive power demand $q_n$ of prosumers $1,\dots,N$.
A dual ascent approach can be adopted. Introduce the Lagrangian
\begin{align}
\mcL(\bxi, & \overline \blambda,\underline \blambda,\bnu, \overline \mu,\underline \mu) = \bxi^\top \bA \bxi + \bb^\top \bxi + c  \notag \\
& + \overline \blambda^\top(\bv - \overline \bv) - \underline \blambda^\top (\bv - \underline \bv) + \bnu^\top (\pi \1 - \bbeta - \bxi) \notag \\
& + \overline \mu (p_0 - \overline p_0) - \underline \mu (p_0 - \underline p_0). \label{eq:Lagrangian}
\end{align}
The dual ascent algorithm solving~\eqref{eq:opt_prob_part} reads 
\begin{subequations}
\label{eq:DA}
\begin{align}
& \bxi(t+1) = \arg \min_{\bxi} \mcL(\bxi(t), \overline \blambda(t),\underline \blambda(t), \bnu(t), \overline \mu(t),\underline \mu(t)) \notag\\
&  = \frac 1 2\Big(\bR(\overline \blambda(t)) - \underline \blambda(t))  + \1 (\underline \mu(t) - \overline \mu(t)) + \bA^{-1} \bnu(t) + \pi \1  \Big)\label{eq:DU_1}\\
& \overline \blambda(t+1) = \big[ \blambda(t) + \epsilon(\bv(t) - \overline \bv) \big]_{\realnumbers^N_+} \label{eq:DU_2}\\
& \underline \blambda(t+1) = \big[ \blambda(t) + \epsilon(\underline \bv - \bv(t)) \big]_{\realnumbers^N_+}   \label{eq:DU_3}\\
& \underline \mu(t+1) = \big[ \mu(t) + \epsilon(\underline p_0 - p_0(t)) \big]_{\realnumbers_+} \label{eq:DU_4}\\
& \overline \mu(t+1) = \big[ \mu(t) + \epsilon(p_0(t) - \overline p_0) \big]_{\realnumbers_+}  \label{eq:DU_5} \\
& \bnu(t+1) = \big[\bnu(t) + \epsilon (\pi \1 - \bbeta - \bxi(t)) \big]_{\realnumbers^N_+}  \label{eq:DU_6}
\end{align}
\end{subequations}
Algorithm~\eqref{eq:DA} has a feedback control implementation. 
While the minimization in~\eqref{eq:DU_1} can be readily performed by the SO, 
voltage and power measurements enable the Lagrange multipliers updates~\eqref{eq:DU_2}--\eqref{eq:DU_5}. The incentive parameters are iteratively updated in~\eqref{eq:DA} until convergence to their optimum values. 
A condition for the convergence of algorithm~\eqref{eq:DA} is provided in the next result.
\begin{proposition}
\label{prop:DA_stability}
Consider the dual ascent control scheme~\eqref{eq:DA} 
and define the matrix
$$\bPhi:= 
\begin{bmatrix}
-\bA \bR & \bA \bR & \bA \1 & -\bA \1 & -\bI
\end{bmatrix}^\top \in \realnumbers^{(3N + 2) \times N}.$$
Then~\eqref{eq:DA} converges to the unique minimizer of~\eqref{eq:opt_prob_part} if
\begin{equation}
\label{eq:epsilon}
\epsilon \leq \frac{4}{\| \bPhi^\top \bA^{-1} \bPhi \|}.
\end{equation}
\end{proposition}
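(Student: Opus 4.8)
The plan is to recognize the scheme~\eqref{eq:DA} as the projected gradient ascent method applied to the Lagrangian dual of the strictly convex quadratic program~\eqref{eq:opt_prob_part}, and then to invoke the classical step-size condition $\epsilon\le 2/L$, where $L$ is the Lipschitz modulus of the dual gradient, which I will check coincides with the bound in~\eqref{eq:epsilon}.

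First I would stack the multipliers into a single vector $\by:=(\overline\blambda,\underline\blambda,\bnu,\overline\mu,\underline\mu)\in\realnumbers^{3N+2}$ and substitute the affine maps~\eqref{eq:vmagn} for $\bv$ and~\eqref{eq:p0_xi} for $p_0$ into the Lagrangian~\eqref{eq:Lagrangian}. This exhibits $\mcL(\cdot,\by)$ as a strictly convex quadratic in $\bxi$ with Hessian $2\bA\succ\0$ and a linear term whose slope is affine in $\by$ through exactly the matrix $\bPhi$ of the statement --- its blocks are the $\bxi$-linear parts of $\bv-\overline\bv$, $\underline\bv-\bv$, $\pi\1-\bbeta-\bxi$, $p_0-\overline p_0$ and $\underline p_0-p_0$. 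Hence the inner minimization over $\bxi\in\realnumbers^N$ is unconstrained with the closed-form solution $\bxi^\star(\by):=\arg\min_{\bxi}\mcL(\bxi,\by)$, and a short computation using $\bb=-\pi\bA\1$ shows this is precisely~\eqref{eq:DU_1}. Substituting $\bxi^\star(\by)$ back in, the dual function $q(\by):=\min_{\bxi}\mcL(\bxi,\by)$ is a concave quadratic, finite on all of $\realnumbers^{3N+2}$, whose Hessian is negative semidefinite with spectral norm $\tfrac12\|\bPhi^\top\bA^{-1}\bPhi\|$.

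Next I would read off from Danskin's theorem that $\nabla q(\by)=\nabla_\by\mcL(\bxi^\star(\by),\by)$ has as its blocks exactly the constraint residuals $\bv-\overline\bv$, $\underline\bv-\bv$, $\pi\1-\bbeta-\bxi$, $p_0-\overline p_0$, $\underline p_0-p_0$ evaluated at $\bxi=\bxi^\star(\by)$ --- i.e., the quantities the controller forms from its voltage and power measurements and plugs into~\eqref{eq:DU_2}--\eqref{eq:DU_6}. Thus, with the natural convention that the measured $\bv(t),p_0(t)$ are those induced by the incentive~\eqref{eq:DU_1}, the scheme~\eqref{eq:DA} is nothing but the iteration $\by(t+1)=\big[\by(t)+\epsilon\,\nabla q(\by(t))\big]_{\realnumbers^{3N+2}_+}$, i.e., projected gradient ascent for $\max_{\by\ge\0}q(\by)$, with $\bxi(t)=\bxi^\star(\by(t))$. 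For the conclusion I still need that this dual problem is solvable and that its solutions recover the primal optimum: since~\eqref{eq:opt_prob_part} is a feasible, coercive, strictly convex QP with polyhedral feasible set, it has the unique minimizer $\bxi^*$ and, no constraint qualification being required for affine constraints, KKT multipliers $\by^*\ge\0$; by weak duality any such $\by^*$ maximizes $q$ over $\realnumbers^{3N+2}_+$, and since $\bxi^*$ minimizes the strictly convex $\mcL(\cdot,\by^*)$ we get $\bxi^\star(\by^*)=\bxi^*$ (and likewise at every dual optimum). Finally, $q$ being concave and $L$-smooth with $L=\tfrac12\|\bPhi^\top\bA^{-1}\bPhi\|$, the projection onto the cone $\realnumbers^{3N+2}_+$ being nonexpansive, and the maximizer set being nonempty, the standard gradient-projection analysis applies verbatim: for $\epsilon\le 2/L$ --- which is exactly~\eqref{eq:epsilon} --- the iterates $\by(t)$ stay monotonically closer (in Euclidean distance) to the solution set and converge to some maximizer $\bar\by$ of $q$; continuity of the affine map $\bxi^\star(\cdot)$ together with $\bxi^\star(\bar\by)=\bxi^*$ then gives $\bxi(t)\to\bxi^*$.

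The step I expect to be the main obstacle is the last one. Because $\bPhi$ has more rows than columns, its $\bA^{-1}$-weighted Gram matrix is only positive semidefinite, so $q$ is \emph{not} strongly concave and its maximizer is generally not unique; a contraction argument is therefore ruled out, and one must instead rely on the averaged-operator / monotone-distance machinery behind the gradient-projection method, which needs only concavity, $L$-smoothness, nonexpansiveness of the projection and $\epsilon\le 2/L$, and which delivers convergence to \emph{some} dual optimum --- enough here because all dual optima yield the same $\bxi^*$. A more routine but error-prone chore is the sign bookkeeping of the previous step (matching each measured residual to the correct block of $\nabla q$ with the correct sign) and checking that the $\by$-affine constants absorbed from~\eqref{eq:vmagn}--\eqref{eq:p0_xi} influence only the linear part of $q$, hence neither its Hessian nor the bound~\eqref{eq:epsilon}.
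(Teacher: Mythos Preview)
Your proposal is correct and follows essentially the same route as the paper: rewrite the inequality constraints compactly as $\bPhi\bxi+\bphi\le\0$, eliminate $\bxi$ via the explicit minimizer of the quadratic Lagrangian (Hessian $2\bA$), and recognize \eqref{eq:DA} as projected gradient ascent on the resulting concave quadratic dual, with the bound \eqref{eq:epsilon} arising from the spectral norm of the dual Hessian $\tfrac12\,\bPhi\bA^{-1}\bPhi^\top$. Your treatment of the final convergence step is actually more careful than the paper's: the paper bounds $\|\bff(\btheta)-\bff(\btheta')\|\le\|\bI-\tfrac{\epsilon}{2}\bPhi\bA^{-1}\bPhi^\top\|\,\|\btheta-\btheta'\|\le\|\btheta-\btheta'\|$ and then asserts ``asymptotic stability,'' whereas you correctly observe that $\bPhi\bA^{-1}\bPhi^\top$ is only positive \emph{semi}definite (so the map is merely nonexpansive, not a strict contraction) and instead invoke the Fej\'er-monotone / averaged-operator machinery for projected gradient methods, together with the fact that every dual optimum yields the same $\bxi^\star(\cdot)=\bxi^*$.
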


\begin{proof}
After defining the vector $\bphi \in \realnumbers^{3N +2}$

\begin{small}
$$\bphi=
\begin{bmatrix}
(\hat \bv - \overline \bv)^\top & (\underline \bv - \hat \bv)^\top & \hat \bd^\top \1 - \overline p_0 & \underline p_0 - \hat \bd^\top \1 &  \pi \1^\top - \bbeta  
\end{bmatrix}^\top$$    
\end{small}

\noindent problem~\eqref{eq:opt_prob_part} can be rewritten as 
\begin{subequations}
\label{eq:opt_comp}
\begin{align}
\bxi^* = \arg \min_{\bxi}& \; \bxi^\top \bA \bxi + \bb^\top \bxi + c \label{eq:opt_comp_cost}\\
\text{s.t. }& \; \bPhi \bxi + \bphi \leq \0 \label{eq:opt_comp_const} 
\end{align}
\end{subequations}
Collecting the Lagrange multipliers of~\eqref{eq:Lagrangian} in the vector 
$$\btheta := 
\begin{bmatrix}
\overline{\blambda}^\top & \underline \blambda^\top & \overline{\mu}^\top & \underline \mu^\top & \bnu^\top
\end{bmatrix}^\top, \quad \btheta \in \realnumbers_+^{3N+2}$$
the Lagrangian of~\eqref{eq:opt_comp} is
$$
\mcL(\bxi,\btheta) = \bxi^\top \bA \bxi + \bb^\top \bxi + c + \btheta^\top (\bPhi \bxi + \bphi) 
$$
whose minimizer, w.r.t. the primal variable, is
$$\bxi(\btheta) = - \frac{\bA^{-1}}{2}(\bb + \bPhi \btheta)$$
which can be used to obtain the dual problem %
$$
\max_{\btheta \in \realnumbers_+^{3N+2}} \bh(\btheta)
$$
where 
\begin{equation*}
\bh(\btheta) = \btheta^\top \Big(\bphi - \frac{\bPhi \bA^{-1} \bb}{2} \Big) -\btheta^\top \frac{\bPhi \bA^{-1}\bPhi^\top }{4}\btheta - \frac{\bb^\top \bA^{-1} \bb}{4}.
\end{equation*}
Since the former is a quadratic problem with a nonempty feasible set and linear constraints, it features zero duality gap with~\eqref{eq:opt_prob_part} because the Slater's conditions hold true~\cite{boyd2004convex}. 
The gradient of $\bh$ is given by
\begin{equation*}
\nabla \bh(\btheta) = \frac{\bPhi \bA^{-1}\bPhi^\top }{2}\btheta - \Big(\frac{\bPhi \bA^{-1} \bb}{2} - \bphi \Big) 
\end{equation*}
and the dual ascent algorithm~\eqref{eq:DA} becomes
$$\btheta(t+1) = \bff(\btheta(t))$$
where
$$
\bff(\btheta) = \big[\btheta + \epsilon \nabla \bh(\btheta) \big]_{\realnumbers_+^{3N+2}}
$$
By recalling that the projection is a nonexpansive operator, we can prove that, under condition~\eqref{eq:epsilon}, the map $\bff$ is a contraction.
Indeed,
\begin{align*}
&\|\bff(\btheta) - \bff(\btheta')\| = \\
& = \big \|\big[\btheta + \epsilon \nabla \bh(\btheta) \big]_{\realnumbers_+^{3N+2}} -
\big[\btheta' + \epsilon \nabla \bh(\btheta') \big]_{\realnumbers_+^{3N+2}} \big\| \\
& \leq \|(\btheta + \epsilon \nabla \bh(\btheta) ) -
(\btheta' + \epsilon \nabla \bh(\btheta')) \| \\
&\leq \Big\| \Big(\bI - \epsilon \frac{\bPhi \bA^{-1}\bPhi^\top }{2} \Big)(\btheta - \btheta') \Big\| \\
& \leq \Big\| \bI - \epsilon \frac{\bPhi \bA^{-1}\bPhi^\top }{2} \Big\| \|\btheta - \btheta'\|  \leq \|\btheta - \btheta'\|
\end{align*}
The asymptotic stability then follows.
\end{proof}

\subsection{First-Order and Zero-Order Feedback-Based Methods for the Incomplete Information Case}
Consider the case where the SO does not have access to full information about the grid state and/or the surplus model of the prosumers. To design algorithms that are suitable for this limited information case, consider the implicit form of the Lagrangian \eqref{eq:Lagrangian} as a function of the optimal demand:
\begin{align}
\mcL(\bxi, & \overline \blambda,\underline \blambda, \bnu, \overline \mu,\underline \mu) = \bxi^\top (\bd^*(\bxi) - \hat \bd )  - \pi \1^\top \bd^*(\bxi) + c' \notag \\
& + \overline \blambda^\top(\bv - \overline \bv) - \underline \blambda^\top (\bv - \underline \bv) - \bnu^\top \bd^*(\bxi) \notag \\
& + \overline \mu (p_0 - \overline p_0) - \underline \mu (p_0 - \underline p_0). \label{eq:Lagrangian_gen}
\end{align}
where $c' = \sum_n (\pi r_n - \pi_0)$.
Following the literature on feedback-based optimization \cite{Bolognani2015Distributed,colombino2019online}, we then propose two methods to seek the saddle points of \eqref{eq:Lagrangian_gen} by leveraging measurements; the convergence analysis\footnote{We note that the proposed algorithms are applicable to any (not necessarily linear) optimal demand function $\bd^*(\bxi)$. If the optimal demand is linear, then the convergence to the unique optimum of \eqref{eq:opt_prob_part} is guaranteed under the suitable choice of algorithm parameters \cite{colombino2019online,chen2020}.} of these methods is left for future work.

The first-order primal-dual method for \eqref{eq:Lagrangian_gen} reads
\begin{subequations}
\label{eq:first_order_feedback}
\begin{align}
& \bxi(t+1) = \bxi(t) - \epsilon \Big(
\bd^*(\bxi(t)) - \hat \bd + \nabla \bd^*(\bxi(t)) \bxi(t) 
\notag\\
&\qquad \quad - \pi \nabla \bd^*(\bxi(t)) \1 + \nabla \bv (t) \left(\overline \blambda(t) - \underline \blambda(t)\right) \notag \\
& \qquad \quad  + \nabla p_0 (t) \left(\overline \mu(t) - \underline \mu(t)   \right) \, \, - \nabla \bd^*(\bxi(t)) \bnu(t)
\Big)
\label{eq:DU_1_gen_fo}\\
& \eqref{eq:DU_2}-\eqref{eq:DU_6}
\end{align}
\end{subequations}
Note that to implement the primal update \eqref{eq:DU_1_gen_fo}, one needs the measurement of the demand $\bd(t) = \bd^*(\bxi(t))$; the sensitivity of the demand to the incentive signal given by the gradient matrix $\nabla(t) = \nabla \bd^*(\bxi(t))$ (cf.~matrix $\bA^\top$ in \eqref{eq:opt_d}); and the sensitivity of the power-flow model to the incentives given by the  gradients  $\nabla \bv(t)$ and $\nabla p_0(t)$ (cf.~matrix $(\bR \bA)^\top$ in \eqref{eq:vmagn}).  These matrices can be estimated from historical data (e.g., from previous demand response events).

In the most extreme case when also the sensitivity matrices above are unknown, we propose to use a zero-order method to seek saddle points of \eqref{eq:Lagrangian_gen} similar to, e.g., \cite{chen2020}. In particular, we employ a double-evaluation approach for approximating the gradient of the Lagrangian:
\begin{align} 
    & \widehat{\nabla} \mcL (t) := \frac{\bzeta(t)}{2\sigma} \Big [ \mcL \left(\widehat{\bxi}_{+} (t), \btheta(t) \right) -  \mcL \left(\widehat{\bxi}_-(t), \btheta(t) \right)  \Big] \label{eq:Lagr_zo}
\end{align}
where \emph{perturbed} incentives $\widehat{\bxi}_+(t)$ and $\widehat{\bxi}_{-}(t)$ are applied to the system with
$\widehat{\bxi}_{\pm}(t) := \bxi(t) \pm \sigma \bzeta(t).$ Here, $\sigma > 0$ is a parameter that controls the magnitude of perturbation, and $\bzeta(t) \in \realnumbers^N$ is a perturbation signal which can be either chosen as a random or deterministic process. In Section \ref{sec:simulations}, we show an application in which $\bzeta(t)$ is a random signal.

With approximation \eqref{eq:Lagr_zo} at hand, the primal update \eqref{eq:DU_1_gen_fo} is replaced with 
\begin{align} \label{eq:primal_zo}
& \bxi(t+1) = \bxi(t) - \epsilon \widehat{\nabla} \mcL (t).
\end{align}
Observe that \eqref{eq:primal_zo} can be implemented in a complete model-free fashion provided that the measurements of demand, voltages, and aggregate power are available.

\section{Numerical Illustration}
\label{sec:simulations}

Here, we validate the incentive mechanism and the feedback-based optimization algorithms from Section \ref{sec:feedback} 
on a realistic distribution feeder.
The IEEE 33-bus radial distribution network~\cite{baran1989network} was simulated using PandaPower with 32 loads chosen randomly from 114 apartments sourced from the UMass Trace Repository~\cite{barker2012smart} to be placed at each of the 32 load nodes.
A normalized retail price $\pi$ was set at 1.0 and the prosumer quadratic utility function coefficients $\alpha_n$ were chosen uniformly at random between 0.3 and 3.0 so that the most responsive prosumer would be a magnitude more responsive than the least responsive prosumer.

We chose a time instance with a heavy enough load that would cause some of the nodal voltage magnitudes to drop below their 0.95 p.u. lower bound.
To counteract the heavy load and keep the voltages within bounds, a generator was placed on bus 31 with a capacity of 6 times its default node load size.
Virtual power plant bounds of $\pm 0.2$ MW were placed around the power going into the feeder with the generator on at full capacity.
The evaluation of the algorithms starts when the generator is turned off and the incentive mechanism is switched on simultaneously.
The parameters of the algorithms were tuned to the following: $\epsilon=0.5$ for the dual ascent; $\epsilon=0.3$ for the first-order algorithm; $\sigma=0.02$ and $\epsilon=0.05$ for the zero-order method.
A vector of uniform random variables between -1 and 1 was chosen for $\boldsymbol{\zeta}(t)$.

The algorithms are compared by showing their total incentive, minimum nodal voltage magnitude, and feeder power versus the number of iterations, in Figures \ref{fig:incentive_total}, \ref{fig:voltage_min}, and \ref{fig:vpp_P0}, respectively.
As expected, the more information we have about the prosumers, the faster we can approach an optimal $\boldsymbol{\xi}$.
Dual ascent utilizes complete knowledge of the prosumer utility functions to converge the fastest, while the first-order algorithm utilizes only the prosumer sensitivities to incentives to converge at a slightly slower rate.  However, the zero-order algorithm has no knowledge of the prosumers and requires exploration to slowly find effective values of $\boldsymbol{\xi}$ with respect to the voltage and virtual power plant bounds.

\begin{figure}[tb]
    \centering
    \includegraphics[width=0.98\columnwidth]{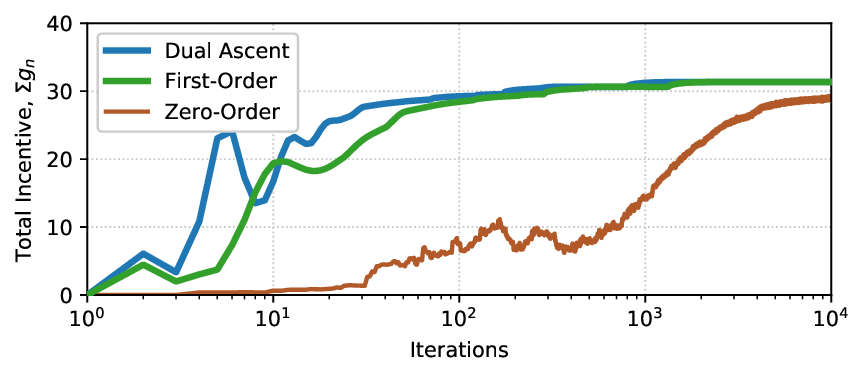}
    \caption{Total incentive of customers vs. the number of iterations.}
    \label{fig:incentive_total}
\end{figure}

\begin{figure}[tb]
    \centering
    \includegraphics[width=0.98\columnwidth]{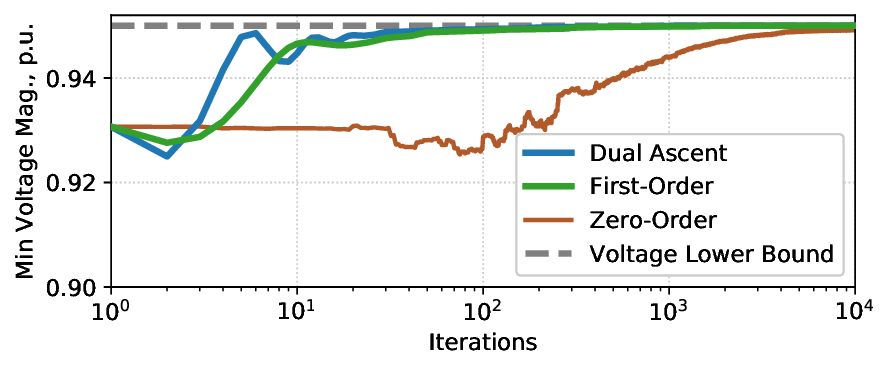}
    \caption{Minimum nodal voltage magnitude vs. the number of iterations.}
    \label{fig:voltage_min}
\end{figure}

\begin{figure}[tb]
    \centering
    \includegraphics[width=0.98\columnwidth]{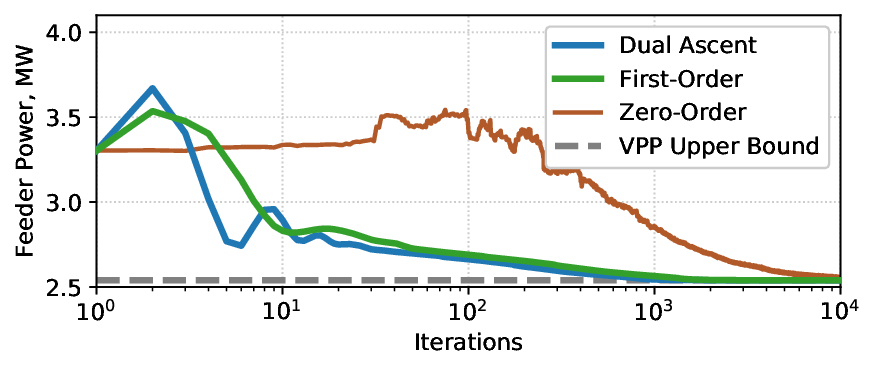}
    \caption{Feeder power vs. the number of iterations.}
    \label{fig:vpp_P0}
\end{figure}

\section{Conclusion}
\label{sec:conclusions}

We have presented an incentive mechanism that, by essentially changing the energy price, makes rational users change their demand and provide grid services, e.g., voltage and power regulation.
The incentives are described here with affine functions. The function parameters that achieve the desired grid performance while minimizing the overall cost for the SO can be computed by solving an optimization problem.
For the case in which the problem cannot be directly solved, because some grid/customer information is not available to the SO, we devised feedback control algorithms that iteratively update the incentives until convergence to the optimum. Future research directions include relaxing the assumption on constant DER power production and considering nonlinear incentive functions.



\bibliographystyle{IEEEtran}
\bibliography{myabrv,Bibliography}

\end{document}